\numberwithin{equation}{section}
\theoremstyle{definition}\newtheorem{definition}{Definition}
\theoremstyle{plain}\newtheorem{theorem}{Theorem}
\theoremstyle{plain}
\theoremstyle{plain}\newtheorem{corollary}{Corollary}
\theoremstyle{plain}\newtheorem{lemma}{Lemma}
\theoremstyle{definition}
\theoremstyle{definition}
\theoremstyle{definition}
\theoremstyle{definition}
\newcommand{\T}{\mathrm{T}}
\newcommand{\bbR}{\mathbb{R}}
\begin{document}

\title{A simple linear time algorithm for smallest enclosing circles on the (hemi)sphere}

\author{
\textsc{Jens Flemming}%
\footnote{Zwickau University of Applied Sciences, Faculty of Physical Engineering/Computer Sciences , D-08012 Zwickau, Germany,
jens.flemming@fh-zwickau.de.}
}

\date{\today\\~\\
\small\textbf{Key words:} smallest enclosing circle, largest empty circle, 1-center problem, minimax location problem, maximin location problem, pole of inaccessibility\\
}

\maketitle

\begin{abstract}
Based on Welzl's algorithm for smallest circles and spheres we develop a simple linear time algorithm for finding the smallest circle enclosing a point cloud on a sphere. The algorithm yields correct results as long as the point cloud is contained in a hemisphere, but the hemisphere does not have to be known in advance and the algorithm automatically detects whether the hemisphere assumption is met.
\par
For the full-sphere case, that is, if the point cloud is not contained in a hemisphere, we provide hints on how to adapt existing linearithmic time algorithms for spherical Voronoi diagrams to find the smallest enclosing circle.
\end{abstract}

\section{Problem statement and applications}

Given $n$ points on a sphere in $\bbR^3$ we want to find a smallest circle on the sphere enclosing all points. This is a well-known problem in the plane, see \cite{Wel91} for some early references, and has been studied on the sphere at least since the 1980s, see \cite{DreWes83} and references therein.

In operations research the smallest circle problem appears in facility planning. How to place a facility to minimize the distance to the most remote of several demand points? In this context the smallest circle problem is known as 1-center problem or minimax location problem. It may be considered in the plane or on the sphere, where the latter is common in case of large areas to cover.

Given a region of interest, in cartography one may ask for a map projection minimizing maximum distortion in that region. In case of azimuthal equidistant projections optimal mapping parameters are given by the solution of the smallest circle problem with respect to the vertices of the region's boundary polygon.

Taking into account the obvious fact that on a sphere the smallest enclosing circle is equivalent to the largest empty circle (see \cite[Theorem~1]{DreWes83} for a formal proof) several other applications arise. For instance, the then so called maximin location problem is solved in operations research to find locations for facilities as far away from a set of points as possible. The facility could be a toxic waste dump, a new restaurant avoiding too much competition, or a military site as far away from enemy positions as possible.

In geography the largest empty circle problem is solved to determine so called poles of inaccessibility, that is, locations farthest from some topographical feature (landmass or sea, for instance). 

Especially for geodata applications fast algorithms for solving the smallest circle problem on the sphere or, equivalently, the largest empty circle problem are of importance, because geodata usually comes as large point clouds or polygons with lots of vertices. For instance, in the OpenStreetMap data base \cite{osm} a national border may be a polygon with more than 100\,000 vertices (France 101\,751, Germany 159\,625, \cite{osmdata}). Considering larger regions may result in millions of points to process.

\section{Algorithms in the literature}

To our best knowledge the oldest attempt to solve the smallest circle problem on a sphere is \cite{DreWes83} by Drezner and Wesolowsky. There the problem is approximately solved based on steepest descent for a minimization problem. This yields local minima only. Thus, a relatively complex procedure is proposed in \cite{DreWes83} to obtain global solutions from local ones. There is no time complexity analysis, but numerical results in \cite{DreWes83} indicate that time complexity is worse than linear. Solution accuracy seems to be in the order of 5 to 6 decimal digits.

In \cite{Pat95} Patel derives a system of nonlinear equations from the KKT optimality conditions of a minimization problem equivalent to the smallest circle problem and suggests a concrete numerical method for approximately solving that system. Regarding the restriction of our algorithm to point clouds contained in a hemisphere (cf.\ below) it's noteworthy that \cite{Pat95} discusses simplifications for a special hemispherical case.

Xue and Sun for the first time propose a linear time algorithm for the spherical smallest circle problem in \cite{XueSun95}. Their algorithm is exact, that is, no numerical approximation techniques are used. They formulate a quadratic optimization problem equivalent to the smallest circle problem and show that there exists a uniquely determined minimizer if the point cloud is contained in a hemisphere. For solving that quadratic problem in linear time they sketch some reformulations and refer to the general approach of Megiddo in \cite{Meg84} (and also to Dyer \cite{Dye86}), who for quadratic problems then refers to \cite{Meg83}. How to handle or even detect the full-sphere case is not considered. They only show that for the full-sphere case the smallest circle problem may have several solutions. Our algorithm proposed below will be based on the ideas of Megiddo, too. But we will give a simple concrete algorithm, which in addition is able to automatically detect the full-sphere case.

The authors of \cite{DasChaCha99} present a concrete exact algorithm for solving the smallest circle problem on a hemisphere in quadratic time. Obviously they are not aware of the linear time solution in \cite{XueSun95} published several years before, although the introduction of \cite{DasChaCha99} contains many valuable references to previous approaches to the problem. In \cite{DasChaCha01} same authors extend their ideas to point clouds not contained in a hemisphere. Time complexity then is cubic.

\section{The algorithm}

At the end of this section we state our algorithm for finding the smallest enclosing circle of a point cloud on a sphere assuming the point cloud is contained in an unknown hemisphere. We prepare this result by recalling Welzl's algorithm and by discussing a tight relation between smallest spherical circles and smallest spheres in 3d space. Based on this discussion we derive a simple condition to check whether we left the hemisphere while processing the point cloud point by point.

Ingredients and references allowing for an almost linear (that is, linearithmic) time full-sphere algorithm will be given in the next section.

\subsection{Welzl's algorithm for planar circles and spheres}

Based on linear time algorithms for more general linear programming tasks Welzl developed a very simple linear time algorithm for the planar smallest circle problem in \cite{Wel91}. Welzl's algorithm easily extends to higher dimensions, especially to smallest enclosing spheres in 3d space.

In \cite{Wel91} Welzl's algorithm is given in a recursive manner with recursion depth equal to the number of points to process. For large point clouds such recursive implementation may be infeasible due to limited stack size or otherwise limited recursion depth. For instance, default maximum recursion depth for interpreters of the widely used Python programming language ranges from 10 for Python on some microcontrollers to 1000 in standard Python\footnote{See https://docs.python.org/3/library/sys.html\#sys.getrecursionlimit for information on how to obtain these values.}. Here we give Welzl's algorithm in an almost iterative formulation, where recursive calls only occur if a point is identified as boundary point of the smallest circle or sphere enclosing all points processed so far.

The algorithm or the corresponding function \texttt{welzl} takes two arguments: the list $P$ of points to enclose and a set $B$ of points known to lie on the smallest enclosing circle's or sphere's boundary. Given at least two points $x_1,\ldots,x_n$ in $\bbR^d$ with $d\in\{2,3\}$ start the algorithm by calling the \texttt{welzl} function with arguments $P:=(x_1,\ldots,x_n)$ and $B:=\emptyset$ (no boundary points known in advance). In the algorithm we always write `circle'. For $d=3$ `sphere` would be more appropriate.

\begin{algorithm}[H]
\caption{Welzl's algorithm in almost iterative form}
\DontPrintSemicolon
\SetArgSty{}
\SetFuncArgSty{}
\SetKwProg{Func}{function}{}{}
\SetKwFunction{Welzl}{welzl}
\SetKw{Return}{return}
\SetKwIF{If}{ElseIf}{Else}{if}{then}{else if}{else}{}

\Func{\Welzl{$P$, $B$}}{
    \eIf{$|B|=d+1$}{\label{line:Bd1}
        \Return{circle determined by points in $B$}\label{line:circle_d1}
    }{
        $m$ $\leftarrow$ $\max\{0,2-|B|\}$\;
        $C$ $\leftarrow$ smallest circle enclosing $B$ and first $m$ points in $P$\;\label{line:circle_d}
        \For{$i=m+1,\ldots,|P|$}{
            \If{$i$th point of $P$ not enclosed by $C$}{
                $b$ $\leftarrow$ $i$th point of $P$\;
                $C$ $\leftarrow$ \Welzl{first $i-1$ points of $P$, $B\cup\{b\}$}\;\label{line:recursion}
                Move $i$th point of $P$ to front (make it the first point).\;\label{line:front}
            }
        }
        \Return{C}
    }
}
\end{algorithm}

The circle/sphere in line~\ref{line:circle_d1} of the algorithm is uniquely determined by the points in $B$ if points aren't collinear ($d=2$) or cocircular ($d=3$), which is always true here by the construction of $B$. In line~\ref{line:circle_d} the smallest circle enclosing two points or the smallest sphere enclosing two or three points has to be computed, which is trivial. Recursion depth in line~\ref{line:recursion} is at most $d+1$. Line~\ref{line:front} implements the move-to-front heuristic suggested in \cite{Wel91} to speed up computations by constructing large initial circles in line~\ref{line:circle_d}.

In its original form Welzl's algorithm has to be randomized to obtain linear expected runtime. But in \cite[Section~3]{Wel91} Welzl also shows that with the variant given above it's sufficient to randomly permutate the points only once before running the algorithm. This permutation step is not stated explicitly in the algorithm above.

\subsection{Smallest spherical circles vs.\ smallest spheres}

Although we already used the term (spherical) circle above, here we give a precise definition for the sake of mathematical rigor. Note that throughout this article we assume that the point cloud to enclose lives on a sphere with radius 1 centered at the origin. This sphere will be referred to as main sphere if there is a risk of ambiguity.

\begin{definition}
Given a center point $c$ on the sphere and a radius $r\in(0,\pi)$ the corresponding \emph{spherical circle} is the set of all points on the sphere with geodesic distance $r$ to the center $c$.
\end{definition}

A spherical circle devides the sphere into two caps, a smaller one and a larger one. For radius $r=\frac{\pi}{2}$ both caps are of equal size. The smaller cap can be represented as the intersection of a ball and the sphere:

\begin{lemma}\label{th:intersection}
The small cap defined by a spherical circle with radius $r<\frac{\pi}{2}$ and center $c$ equals the intersection of the sphere and the ball with radius $\sin c$ centered at $(\cos r)\,c$.
\end{lemma}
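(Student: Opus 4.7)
The plan is to prove the equality of the two sets by a direct algebraic equivalence between the geodesic condition defining the small cap and the Euclidean distance condition defining the intersection with the ball. Note that the statement should read ``radius $\sin r$'' rather than ``$\sin c$'' (evidently a typo), and I proceed under that reading.

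First I would unpack both sides in terms of the inner product. Since the main sphere has radius $1$ and is centered at the origin, a point $x$ lies on it iff $\|x\|=1$, and the geodesic distance between two unit vectors $c$ and $x$ is $\arccos(c\cdot x)$. Hence $x$ belongs to the small cap precisely when $\arccos(c\cdot x)\le r$, and since $r<\frac{\pi}{2}$ and $\arccos$ is decreasing, this is equivalent to
\[
  c\cdot x \;\ge\; \cos r.
\]
On the other hand, $x$ lies in the ball of radius $\sin r$ around $(\cos r)\,c$ iff $\|x-(\cos r)\,c\|^2\le\sin^2 r$.

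Next I would expand the Euclidean distance using $\|x\|=1$ and $\|c\|=1$:
\[
  \|x-(\cos r)\,c\|^2 \;=\; 1 - 2(\cos r)(c\cdot x) + \cos^2 r.
\]
Comparing this with $\sin^2 r = 1-\cos^2 r$ and using $\cos r>0$ (again from $r<\frac{\pi}{2}$), the inequality $\|x-(\cos r)\,c\|^2\le\sin^2 r$ simplifies, after cancelling the $1$'s and dividing by $2\cos r$, to exactly $c\cdot x\ge\cos r$. Combining these equivalences for $x$ on the main sphere gives the claimed set equality.

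There is no real obstacle here; the whole argument is a one-line computation once one writes the geodesic condition as an inner-product inequality. The only mild care needed is to justify dividing by $\cos r$ (which requires $r<\frac{\pi}{2}$, as hypothesized) and to stress that the equivalence is established only for points already on the main sphere, since the ball itself of course contains interior points off the sphere.
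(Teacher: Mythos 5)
Your proof is correct and is essentially the argument the paper summarizes as ``basic trigonometry'': writing the cap as $\{x:\|x\|=1,\ c\cdot x\ge\cos r\}$ and expanding $\|x-(\cos r)c\|^2$ is exactly the intended computation, and your reading of the radius as $\sin r$ (correcting the obvious typo ``$\sin c$'') is right. Nothing is missing; you have simply written out the one-line verification the paper leaves implicit.
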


\begin{proof}
This follows immediately from basic trigonometry.
\end{proof}

\begin{definition}
A set of points on the sphere is \emph{contained in a hemisphere} if there is an enclosing spherical circle with radius $r<\frac{\pi}{2}$.
\end{definition}

If a point cloud is contained in a hemisphere, the point cloud's smallest enclosing circle is uniquely determined (see \cite[Theorem~3]{XueSun95} or \cite[Lemma~1]{Wel91} for the planar case, the proof for easily extends to the spherical setting). If the point cloud is not contained in a hemisphere, there might be several smallest enclosing circles (see \cite[first paragraph in Section~2]{XueSun95}).

The circle in Lemma~\ref{th:intersection} above is a great circle on the corresponding ball's surface. Consequently, there's no smaller ball satisfying that intersection property. This observation can be extended as follows:

\begin{lemma}\label{th:rrcc}
Let the point cloud be contained in a hemisphere. If the smallest enclosing circle on the sphere is centered at $c$ with radius $r$ and if the point cloud's smallest enclosing sphere is centered at $\tilde{c}$ with radius $\tilde{r}$, then
\begin{equation}
\tilde{c}=(\cos r)\,c\qquad\text{and}\qquad\tilde{r}=\sin r.
\end{equation}
If the point cloud is not contained in a hemisphere, then the smallest enclosing sphere is centered at the origin with radius 1.
\end{lemma}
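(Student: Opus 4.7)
The plan is to directly minimise the function $\tilde c\mapsto\max_i\|x_i-\tilde c\|$ over $\tilde c\in\mathbb R^3$ and to identify its unique minimiser. Using $\|x_i\|=1$, expansion gives
\begin{equation}
\|x_i-\tilde c\|^2=1+\|\tilde c\|^2-2\langle x_i,\tilde c\rangle,
\end{equation}
so after writing $\tilde c=s\,c'$ with $c'$ a unit vector and $s\ge0$, the squared enclosing radius is controlled entirely by
\begin{equation}
\max_i\|x_i-sc'\|^2=1+s^2-2s\cos\vartheta(c'),
\end{equation}
where $\vartheta(c')\in[0,\pi]$ denotes the geodesic distance from $c'$ to the point of the cloud farthest from $c'$, i.e.\ the radius of the smallest spherical circle centred at $c'$ that encloses the cloud.

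The first step is an inner minimisation in $s$ for fixed $c'$. The quadratic $1+s^2-2s\cos\vartheta(c')$ has unconstrained minimiser $s=\cos\vartheta(c')$, so the constrained optimum over $s\ge0$ is $s^\ast=\max\{0,\cos\vartheta(c')\}$. When $\vartheta(c')<\frac{\pi}{2}$, the optimum value is $\sin^2\vartheta(c')$, achieved at the strictly positive $s^\ast=\cos\vartheta(c')$. When $\vartheta(c')\ge\frac{\pi}{2}$, the optimum value is $1$, achieved \emph{only} at $s^\ast=0$, since the quadratic is then strictly greater than $1$ for every $s>0$.

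The outer minimisation over the direction $c'$ now splits according to the hemisphere dichotomy. If the cloud lies in a hemisphere, there is some direction with $\vartheta(c')<\frac{\pi}{2}$, and because $\sin$ is monotone on $[0,\frac{\pi}{2}]$, minimising $\sin^2\vartheta(c')$ is equivalent to minimising $\vartheta(c')$. By definition of the smallest enclosing spherical circle, this minimum is attained at $c'=c$ with value $r$, and reading off the associated $s^\ast=\cos r$ produces $\tilde c=(\cos r)\,c$ and $\tilde r=\sin r$. If the cloud is not contained in a hemisphere, then $\vartheta(c')\ge\frac{\pi}{2}$ for every direction $c'$, so the per-direction optimum is $\tilde r=1$ attained only at $s=0$; hence the global minimiser is the origin with radius~$1$.

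The delicate point I expect to require the most care is the \emph{uniqueness} part, in particular that no non-trivial $\tilde c$ can tie the optimum in the full-sphere case. This is handled by the strictness observation from the inner minimisation: for every direction $c'$ with $\vartheta(c')\ge\frac{\pi}{2}$ and every $s>0$ one has $1+s^2-2s\cos\vartheta(c')\ge 1+s^2>1$. Combined with the standard uniqueness of Euclidean smallest enclosing balls, this forces $\tilde c=0$. I think this parametric route is cleaner than trying to combine the upper bound $\tilde r\le\sin r$ coming directly from Lemma~\ref{th:intersection} with a separate lower bound derived from the two or three boundary points of the spherical minimiser, which would otherwise require a case analysis on how many points determine the smallest enclosing spherical circle.
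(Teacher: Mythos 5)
Your proof is correct, and it takes a genuinely different route from the paper. The paper argues via Lemma~\ref{th:intersection}: the smallest enclosing circle yields an enclosing ball of radius $\sin r$ centered at $(\cos r)\,c$, and a two-way contradiction argument (a smaller enclosing sphere would cut the main sphere in a smaller enclosing circle, and vice versa) identifies this ball as the smallest enclosing sphere; the full-sphere case is likewise handled by contradiction, since a radius $\tilde{r}<1$ would produce an enclosing circle of radius $\arcsin\tilde{r}<\frac{\pi}{2}$. You instead minimize $\tilde{c}\mapsto\max_i\Vert x_i-\tilde{c}\Vert$ directly, using the polar decomposition $\tilde{c}=s\,c'$ to reduce the problem to the quadratic $1+s^2-2s\cos\vartheta(c')$ in $s$ followed by an outer minimization over directions, which collapses to the spherical $1$-center objective $\vartheta(c')$. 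Your computation of the inner optimum ($s^\ast=\max\{0,\cos\vartheta(c')\}$, with value $\sin^2\vartheta(c')$ or $1$) and the identification $\min_{c'}\vartheta(c')=r$ at $c'=c$ are both sound, and the appeal to uniqueness of the Euclidean smallest enclosing ball legitimately converts "a minimizer" into "the minimizer". What your route buys: it is self-contained (no use of Lemma~\ref{th:intersection}), it treats both cases from one formula, and in the full-sphere case the strict inequality $1+s^2-2s\cos\vartheta(c')\geq 1+s^2>1$ for $s>0$ explicitly establishes that the center must be the origin — a point the paper's contradiction argument addresses only for the radius, leaving the location of the center implicit. What the paper's route buys is brevity and geometric transparency, reusing the cap--ball correspondence that is needed elsewhere in the paper anyway. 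The only small things left tacit in your write-up are the observations that $\vartheta(c)=r$ exactly (tightness of the smallest enclosing circle, which follows from minimality) and that directions with $\vartheta(c')\geq\frac{\pi}{2}$ are beaten by the value $\sin^2 r<1$ in the hemisphere case; both are immediate and do not constitute gaps.
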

\begin{proof}
The assertions are quite obvious. Given the smallest enclosing circle, take corresponding ball from Lemma~\ref{th:intersection}. Its surface is the smallest enclosing sphere because any smaller enclosing sphere would yield a smaller enclosing circle via intersection with the main sphere. The other way round, starting with the smallest enclosing sphere, the circle defined by intersection with the main sphere has to be the smallest enclosing circle. Else, it would give rise to a smaller enclosing sphere via Lemma~\ref{th:intersection}.
\par
Now assume that the point cloud is not contained in a hemisphere. If the smallest enclosing sphere would have radius $\tilde{r}<1$, by intersection with the main sphere we would obtain an enclosing circle with radius \begin{equation}
r=\arcsin\tilde{r}<\frac{\pi}{2},
\end{equation}
which contradicts the assumption $r\geq\frac{\pi}{2}$.
\end{proof}

The basic idea of Lemma~\ref{th:rrcc} that smallest enclosing circles on a sphere are closely related to smallest enclosing spheres has already been mentioned in \cite{stackex} without any details or proofs.

\subsection{Derivation of the algorithm}

The idea of our Welzl-type algorithm for point clouds on the sphere is to apply Welzl's algorithm in 3d (smallest enclosing sphere), but implement some modifications which on the one hand allow to detect whether the point cloud is contained in a hemisphere and on the other hand bring the algorithm very close to the 2d variant of Welzl's algorithm.

Our algorithm heavily relies on the following result:

\begin{theorem}\label{th:b4}
Let $d=3$ in Welzl's algorithm and start the algorithm with some point cloud $P$ and an empty set $B$.
\begin{itemize}
\item[(i)]
If the point cloud $P$ is contained in a hemisphere, then the condition $|B|=4$ in line~\ref{line:Bd1} of the algorithm is never satisfied.
\item[(ii)]
If the point cloud $P$ is not contained in a hemisphere, then at least one of the following states will be observed during execution of Welzl's algorithm:
\begin{itemize}
\item[(a)]
In line~\ref{line:circle_d} of the algorithm the initial sphere is constructed from two points which lie antipodally on the main sphere.
\item[(b)]
The \texttt{welzl} function is called with $|B|=3$ and the three boundary points lie on a great circle of the main sphere.
\item[(c)]
The condition $|B|=4$ in line~\ref{line:Bd1} of the algorithm is satisfied.
\end{itemize}
\end{itemize}
\end{theorem}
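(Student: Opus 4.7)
My plan is to handle (i) and (ii) separately, using Lemma~\ref{th:rrcc} in both cases to convert the abstract statement ``smallest enclosing circle on the main sphere'' into a concrete description of the associated smallest enclosing sphere in $\bbR^3$, and then to exploit that description inside Welzl's recursion.

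For part (i), I would argue by contradiction. Suppose $|B|=4$ is reached in some recursive invocation; then there is a parent call \texttt{welzl}$(P',B)$ with $|B|=3$ that recurses upon finding a point $b\in P'$ outside its current sphere $C$. In that parent call one has $m=0$, so $C$ is the unique sphere carrying the three points of $B$ on a great circle. The crucial step is to show that this $C$ is in fact the smallest sphere enclosing $P'\cup B$. For this I would combine the standard Welzl invariant, that $B$ lies on the boundary of the smallest sphere enclosing $P'\cup B$, with Lemma~\ref{th:rrcc} applied to the hemispherical subcloud $P'\cup B\subseteq P$: its smallest enclosing sphere has radius strictly less than $1$, and its intersection with the main sphere is a great circle of that smaller sphere containing all of its boundary points. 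Hence the three (necessarily non-collinear) points of $B$ determine this great circle and, through it, the smallest enclosing sphere itself. So $C$ equals that sphere, which forces $b$ to lie inside $C$, contradicting the choice of $b$.

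For part (ii), assume $P$ is not contained in a hemisphere. By Lemma~\ref{th:rrcc} the smallest enclosing sphere of $P$ is the main sphere, so by correctness of Welzl's 3d algorithm the top level invocation returns a sphere equal to the main sphere. I would then trace this returned value back through the recursion tree: every value the algorithm ever assigns to $C$ is either created as an initial sphere on line~\ref{line:circle_d} of some call, or returned directly on line~\ref{line:circle_d1}. It therefore suffices to inspect, at each of these construction sites, under what conditions the main sphere can arise. Line~\ref{line:circle_d1} corresponds directly to state (c). On line~\ref{line:circle_d} with $|B|\in\{0,1,2\}$ the initial sphere has two points at opposite ends of a diameter, so it coincides with the main sphere only when those two points lie antipodally on the main sphere, which is state (a). On line~\ref{line:circle_d} with $|B|=3$ the initial sphere carries the three points of $B$ on a great circle, which is a great circle of the main sphere precisely when $B$ already lies on a great circle of the main sphere, that is, state (b). Since the main sphere must be created somewhere in the recursion tree, at least one of (a), (b), (c) must occur.

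The main obstacle lies in (i), specifically in justifying that the sphere carrying $B$ on a great circle really is the smallest enclosing sphere of $P'\cup B$ rather than merely a candidate. Formally this requires the standard Welzl invariant together with the uniqueness statement for hemispherical clouds noted just after the hemisphere definition. Once that geometric identification is in place, both parts reduce to routine case checks, and the exhaustiveness claim in (ii) follows from a straightforward enumeration of the places where the variable $C$ can ever be created.
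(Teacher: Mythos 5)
Your proof of (i) is essentially the paper's own argument: both combine the Welzl invariant (the points of $B$ lie on the boundary of the smallest ball enclosing the hemispherical subcloud) with Lemma~\ref{th:rrcc} to conclude that the sphere carrying the three points of $B$ on a great circle already \emph{is} the smallest enclosing sphere of the current subcloud, so no point can be found outside it and no depth-4 call is ever made. For (ii), however, you take a genuinely different route. The paper argues forward: it takes the first index $i$ at which the processed prefix ceases to be contained in a hemisphere and shows that, as long as states (a) and (b) do not occur, every sphere constructed in the ensuing chain of recursive calls has radius strictly below $1$ and hence cannot enclose the non-hemispherical prefix, so the cascade of recursions continues until $|B|=4$. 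You argue backward: by Lemma~\ref{th:rrcc} the smallest enclosing sphere of $P$ is the main sphere, by correctness of Welzl's algorithm this is what the top-level call returns, and since every value of $C$ is created either at line~\ref{line:circle_d1} (which is state (c)) or at line~\ref{line:circle_d} (where the main sphere can only arise from two antipodal points, state (a), or, when $|B|=3$, from boundary points lying on a great circle of the main sphere, state (b)), at least one state must occur. Your enumeration of the ``creation sites'' of $C$ is clean and exhaustive, but it uses the global correctness of Welzl's algorithm (with exact arithmetic) as a black box, which the paper's forward argument avoids; the paper's argument also shows \emph{when} the violation is detected, namely right after the first point that destroys the hemisphere property, which is of some algorithmic interest. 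One small point: in your $|B|=3$ case only the ``only if'' direction is needed, and it is safe under either reading of line~\ref{line:circle_d}; your ``precisely when'' is accurate under the reading, used in the paper's own proof of (i), that the constructed sphere carries $B$ on its boundary, while under a literal ``smallest enclosing ball of three points'' reading the converse can fail for obtuse triples, harmlessly for your argument since the center of that ball lies in the convex hull of the points and so the needed implication still holds.
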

\begin{proof}
To prove the theorem we take a close look at Welzl's algorithm. Welzl's algorithm processes the point cloud point by point. It starts with two points (even in the 3d case considered here), determines the smallest enclosing sphere for these two points, then goes on to the third point. If the third point is enclosed by the initial sphere, the fourth point is processed. If the third point is not enclosed by the initial sphere, the point will be marked as boundary point. A recursive call to the \texttt{welzl} function is used to determine a new sphere enclosing the first two points and having the third point on it (that is, on the boundary of corresponding ball). Then the fourth point is processed in the same way as the third point, and so on.
\par
During recursive calls to \texttt{welzl} another boundary point may be detected, leading to a recursive call again. This way the smallest enclosing sphere is determined point by point with recursive calls of \texttt{welzl} whenever the current sphere turns out to be too small. Recursion depth is limited by the number of boundary points uniquely determining a sphere, that is, by 4.
\par
To prove (i) observe that in a depth-3 recursive call to \texttt{welzl} there are three boundary points ($|B|=3$) and a number of points in $P$ to enclose. On the one hand, the smallest sphere defined by the boundary points has all three boundary points on one great circle. On the other hand, from Lemma~\ref{th:rrcc} we know that the smallest enclosing sphere for $P$ with $B$ as boundary points is a sphere with the three boundary points on a great circle, too. Thus, both spheres coincide and the smallest enclosing sphere for $P$ with boundary points $B$ can be determined from $B$ without touching $P$. All points of $P$ automatically are enclosed by the smallest enclosing sphere of $B$, that is, there will be no further recursive call of \texttt{welzl}.
\par
To prove (ii) assume that the point cloud $\{x_1,\ldots,x_n\}$ to enclose is not contained in a hemisphere. Then there is $i$ such that $\{x_1,\ldots,x_{i-1}\}$ is contained in a hemisphere, but $\{x_1,\ldots,x_i\}$ is not. While processing $x_i$, the point will be marked as boundary point and \texttt{welzl} will be called with $P=(x_1,\ldots,x_{i-1})$ and $B=\{x_i\}$. During this recursive call the smallest enclosing sphere for $x_i$ and some other point $x_j$ with $j\in\{1,\ldots,i-1\}$ will be constructed. If state (a) is not observed, there will be some $x_k$, $k\in\{1,\ldots,i-1\}$ not enclosed by this sphere, because the sphere's radius is strictly below 1 and $\{x_1,\ldots,x_i\}$ is not contained in any hemisphere. Thus, \texttt{welzl} will be called recursively with $P=(x_1,\ldots,x_{k-1},x_{k+1},x_{i-1})$ and $B=\{x_i,x_k\}$. Following the same reasoning again we see a third recursive call to \texttt{welzl}, now with $|B|=3$. If state (b) is not observed in this call, analogous reasoning applies again, resulting in a fourth recursive call, now with $|B|=4$, which proves the theorem.
\end{proof}

The theorem states that up to some easily detectable edge cases (cf.\ states (a) and (b) in the theorem) the condition $|B|=4$ can be checked to determine whether the point cloud lies in a hemisphere or not. As long as the points being processed are contained in a hemisphere the 3d algorithm is almost identical to the 2d algorithm due to the lack of depth-4 recursive calls. The only difference is that for $|B|=3$ the 2d variant immediately returns the circle determined by the boundary points whereas the 3d variant also checks whether all points processed so far are enclosed by the smallest sphere determined by $B$. If we know in advance that all points are contained in a hemisphere we may skip this additional check, which shows that the 2d Welzl algorithm correctly works for point clouds on hemispheres, too, in addition to planar clouds.

\subsection{Statement of the algorithm}

Here we state the complete algorithm based on the considerations from the previous subsections. The only difference to the derivation above is that to avoid a fourth recursive call we perform the hemisphere check immediately after constructing the smallest enclosing sphere for three boundary points. This does not change the algorithm's behavior, but only its structure.

Like for Welzl's algorithm, start the following algorithm with $P$ containing a list of all points to enclose and $B=\emptyset$. Randomly permutate the list of points before applying the algorithm.

\begin{algorithm}[H]
\caption{Smallest enclosing circle on the sphere}
\DontPrintSemicolon
\SetArgSty{}
\SetFuncArgSty{}
\SetKwProg{Func}{function}{}{}
\SetKwFunction{Secots}{secots}
\SetKw{Return}{return}
\SetKw{Stop}{stop}
\SetKwIF{If}{ElseIf}{Else}{if}{then}{else if}{else}{}

\Func{\Secots{$P$, $B$}}{
    \eIf{$|B|=3$}{
        $C$ $\leftarrow$ circle determined by points in $B$\;\label{line:c3}
        \If{$C$ is a great circle}{\label{line:state_a}
            \Stop\tcc*{points not in hemisphere, state (a)}
        }
        \If{$C$ does not encloses all points in $P$}{\label{line:hemitest}
            \Stop\tcc*{points not in hemisphere, state (c)}
        }
        \Return{$C$}\;
    }{
        $M$ $\leftarrow$ union of $B$ and $2-|B|$ first points in $P$\;
        \If{points in $M$ are antipodal on main sphere}{
            \Stop\tcc*{points not in hemisphere, state (b)}
        }
        $C$ $\leftarrow$ smallest circle enclosing $M$\;
        \For{$i=2-|B|+1,\ldots,|P|$}{
            \If{$i$th point of $P$ not enclosed by $C$}{
                $b$ $\leftarrow$ $i$th point of $P$\;
                $C$ $\leftarrow$ \Secots{first $i-1$ points of $P$, $B\cup\{b\}$}\;\label{line:call}
                Move $i$th point of $P$ to front (make it the first point).\;\label{line:mtf}
            }
        }
        \Return{C}
    }
}
\end{algorithm}

The stop command in the algorithm stops the whole program, aborting all recursive calls to \texttt{secots}. States (a), (b), (c) refer to Theorem~\ref{th:b4}.

\begin{corollary}
The algorithm has linear expected runtime with respect to the number of points to process.
\end{corollary}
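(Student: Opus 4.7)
The plan is to observe that \texttt{secots} is a straightforward instrumentation of Welzl's 3d algorithm applied to the point cloud, with a handful of constant-time tests added and one early-termination mechanism via \texttt{stop}. Welzl proved in \cite[Section~3]{Wel91} that a single initial random permutation suffices to obtain linear expected runtime for the smallest enclosing sphere problem, so it is enough to argue that none of the modifications worsen the asymptotic complexity.

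First, I would verify that when the point cloud lies in a hemisphere, \texttt{secots} mirrors the execution of Welzl's algorithm exactly up to the branch where Welzl would descend to $|B|=4$; by Theorem~\ref{th:b4}(i) this branch never arises, so no execution paths are lost. The extra great-circle test in line~\ref{line:state_a} and the antipodality test before constructing the initial sphere are both $O(1)$ per invocation and add only constant overhead to each recursive call. The only non-trivial addition is the explicit enclosure test at $|B|=3$ in line~\ref{line:hemitest}; this costs $O(|P|)$, but corresponds exactly to the work that Welzl's 3d algorithm already performs in its for-loop at the depth-3 recursion level, where by Theorem~\ref{th:b4}(i) no point is found outside the constructed sphere. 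Hence aggregate work, and in particular the total number of enclosure tests, is unchanged.

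Second, when the point cloud is not contained in a hemisphere, Theorem~\ref{th:b4}(ii) guarantees that at least one of the states (a), (b), (c) occurs during execution, triggering \texttt{stop}. Early termination can only reduce runtime relative to the unmodified Welzl algorithm, so the expected linear bound survives in this case as well.

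The main obstacle I anticipate is being precise about the bookkeeping equivalence: one must argue that moving the hemisphere check to immediately after constructing the $|B|=3$ sphere (rather than leaving it implicit in a further recursive descent that, in the hemisphere case, would return the same sphere without further recursion) affects neither the set of recursive calls made nor the aggregate time spent on enclosure tests. Once this is spelled out, together with the fact that move-to-front and the single initial random permutation are retained verbatim from Welzl's formulation, the corollary follows directly from Welzl's expected linear-time analysis for 3d smallest enclosing spheres.
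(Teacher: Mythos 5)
Your proposal is correct and follows essentially the same route as the paper: the paper's proof simply observes (referring back to the derivation) that \texttt{secots} does no more work than Welzl's 3d algorithm for smallest enclosing spheres, whose expected runtime is linear. You spell out the bookkeeping (constant-time extra tests, the $|B|=3$ enclosure check matching the depth-3 for-loop of the 3d algorithm, and early termination via \texttt{stop} only reducing work) in more detail than the paper does, but the underlying argument is identical.
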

\begin{proof}
From the derivation of the algorithm we see that the runtime is not worse than the runtime of Welzl's algorithm for smallest enclosing sphere's in 3d space. Thus, runtime is linear.
\end{proof}

\subsection{Implementation details}

For testing whether a point is enclosed by the current circle we do not have to calculate the point's distance to the circle's center. Instead we may test whether the point lies on the correct side of the plane containing the circle. If the current circle has center $c$ and radius $r$, then this plane is
\begin{equation}
\bigl\{x\in\bbR^3:\,u^\T\,x=t\bigr\},\qquad\text{with \quad$u:=c$\quad and \quad$t:=\cos r$.}
\end{equation}
A point $x$ is on the correct side if $u^\T\,x\geq t$. This test requires fewer elementary operations than a usual containment test for a circle. Given the plane, that is, $u$ and $t$, the circle's center is $u$ and the radius is $\arccos t$.

Working with such planes, construction of the circle in line~\ref{line:c3} of our algorithm boils down to solving the system of linear equations
\begin{align}
b_1^\T\,v&=1,\notag\\
b_2^\T\,v&=1,\\
b_3^\T\,v&=1\notag
\end{align}
for $v\in\bbR^3$, where $b_1,b_2,b_3$ are the three boundary points. Then the plane (and, thus, the circle) is determined by $u:=\frac{v}{\Vert v\Vert}$ and $t:=\frac{1}{\Vert v\Vert}$. The circle is a great circle (cf.\ line~\ref{line:state_a}) if and only if the system has no solution.

To save computation time we may skip the hemisphere test in line~\ref{line:hemitest} if we know in advance that the point cloud is contained in a hemisphere.

Whenever a recursive call to \texttt{secots} returns in line~\ref{line:call} (that is, no stop) we know that the points processed so far are contained in a hemisphere. If the next point $x$ not enclosed by the current circle satisfies $u^\T\,x>-t$, from simple geometric considerations we see that the already processed points and $x$ together are contained in a hemisphere, too. Passing this knowledge on to the subsequent recursive call of \texttt{secots} avoids the then unnecessary hemisphere test in line~\ref{line:hemitest}.

Care has to be taken implementing the move-to-front heuristic in line~\ref{line:mtf}. This operation has to be executed in constant time, which is possible if the point cloud is stored in a doubly linked list.

A ready-to-use Python implementation along the lines sketched here is available in \cite{github}. Figure~\ref{fig:time} has been obtained with this implementation and shows that runtime indeed behaves linearly as expected from the theory.

\begin{figure}[ht]
\includegraphics[width=1\textwidth]{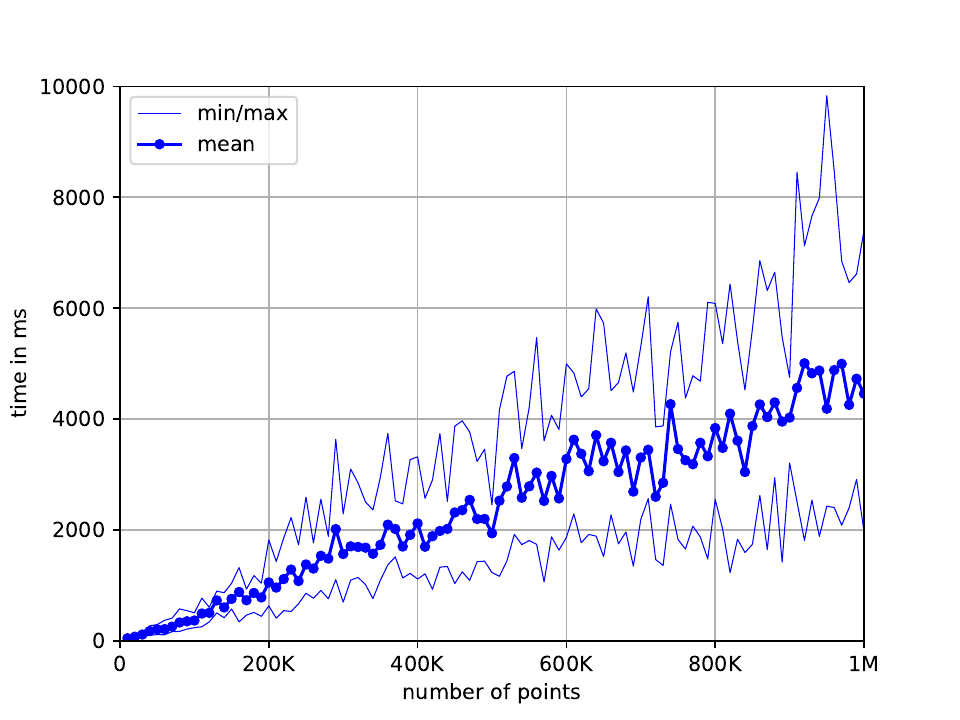}
\caption[Numerical runtime]{\label{fig:time}Runtime for point clouds of different sizes between 100\,000 and 1~million points. Points are uniformly distributed in a `rectangle'. Longitudes span 90 degrees, latitudes span 60 degrees. For each cloud size the algorithm is run 10 times. Computations were done on one core of an Intel Core i7-8550U CPU with 1.8\,GHz. Code is available in \cite[example `time']{github}.}
\end{figure}

\section{The full-sphere case}

As our linear time algorithm only yields results for point clouds contained in a hemisphere one may ask whether there are linear time algorithms for the full-sphere case. To our best knowledge the answer is `no'. Solutions in the full-sphere case aren't uniquely determined. Thus, processing the point cloud point by point would yield kind of local solution only.

We mentioned in the introductory section that the smallest enclosing circle problem on the sphere is equivalent to the largest empty circle problem on the sphere. The planar largest empty circle problem typically is solved by computing the Voronoi diagram of the point cloud and then looking for the Voronoi vertex with the largest distance to its cells' centers, see \cite{Tou83}, for instance. This vertex is the center of the largest empty circle. Here a Voronoi vertex is a point where three or more Voronoi cells meet.

To solve the largest enclosing circle problem on the sphere we have to compute a Voronoi diagram on the sphere. This is possible in linearithmic time. Concrete algorithms are given in \cite{DinMam10} and \cite{ZheEnnRicPal11}. If the point cloud has $n$ points, the Voronoi diagram has $2\,n-4$ vertices, see \cite[Section~2]{ZheEnnRicPal11}. Thus, the vertex with largest distance to its cells' centers can be found in linear time.

Note that the algorithms presented in \cite{DinMam10} and \cite{ZheEnnRicPal11} next to the Voronoi vertices yield all relevant information like neighboring cells and distances to cell centers within their linearithmic time bound. Thus, the overall algorithm for computing the largest empty circle has linearithmic runtime, too.

\section{Conclusions}

We have shown that computing smallest enclosing circles for point clouds on the sphere is possible in linear time by mixing the 2d and the 3d variant of Welzl's algorithm. Although our algorithm only yields correct results if the point cloud is contained in a hemisphere, the hemisphere does not have to be known in advance. In addition, our algorithm detects whether the hemisphere assumption is true or not.

Algorithms developed during the past 40 years starting with  \cite{DreWes83} have polynomial runtime or yield approximate solutions only. Especially for geodata applications those algorithms are too slow. With the algorithm presented in this article millions of points can be processed on consumer hardware within seconds.

Next to the theoretical linear time result, which directly carries over from Welzl's work to ours, we have provided numerical evidence that our implementation does not violate the theoretical limit.

A ready-to-use Python implementation of our algorithm published in \cite{github} makes our results easily accessible to other researchers and practitioners.

\bibliography{secots}

\end{document}